\documentclass[10pt,a4paper]{article}

\usepackage{times}
\usepackage{graphicx}
\usepackage{amsmath,amssymb,amsthm}
\usepackage{algorithm}
\usepackage{algorithmic}
\usepackage{tikz}
\usetikzlibrary{positioning,calc}
\usepackage{xcolor}
\usepackage[square,authoryear]{natbib}
\usepackage[capitalize,noabbrev]{cleveref}
\usepackage{authblk}
\usepackage{a4wide}

\usepackage{algorithm}
\usepackage{algorithmic}
\usepackage[T1]{fontenc}

\newcommand{\PROCEDURE}[2]{\STATE \textbf{function} #1(#2) \begin{ALC@g}}
\newcommand{\ENDPROCEDURE}{\end{ALC@g} \STATE \textbf{end function}}
\newcommand{\SWITCH}[1]{\STATE \textbf{switch} (#1) \begin{ALC@g}}
\newcommand{\ENDSWITCH}{\end{ALC@g} \STATE \textbf{end switch}}
\newcommand{\CASE}[1]{\STATE \textbf{case} #1\textbf{:} \begin{ALC@g}}
\newcommand{\ENDCASE}{\end{ALC@g}}

\newcommand{\DEFAULT}{\STATE \textbf{default:} \begin{ALC@g}}
\newcommand{\ENDDEFAULT}{\end{ALC@g}}
\newcommand{\DEFAULTLINE}[1]{\STATE \textbf{default:} }

\newcommand{\pluseq}{\mathrel{+}=}

\newtheorem{theorem}{Theorem}
\newtheorem{lemma}{Lemma}
\newtheorem{corollary}{Corollary}
\newtheorem{example}{Example}
\newtheorem{proposition}{Proposition}
\newtheorem{definition}{Definition}

\begin{document}

\title{Group Vitality Indices: Axioms and Algorithms}

\author{Natalia Kucharczuk}
\author{Oskar Skibski}

\affil{\normalsize University of Warsaw, Banacha 2, 02-097 Warszawa, Poland}
\affil{\textit{kucharczuk.natalia@gmail.com},
\textit{oskar.skibski@mimuw.edu.pl}}

\maketitle

\begin{abstract}
We consider the problem of assessing a group of nodes in a network.
Our focus is on vitality indices---a natural class of centrality measures that evaluate the importance of a node by examining the impact of its removal on the network.
We conduct a comprehensive analysis of group vitality indices.
Specifically, we show that every vitality index admits a unique extension to groups, which can be defined using a group variant of the Shapley value recently proposed in the literature.
We also provide an axiomatization of the entire class, along with two specific group vitality indices that satisfy additional normalization conditions.
Furthermore, we study the computational properties of all vitality indices, as well as Group Attachment Centrality.
\end{abstract}

\section{Introduction}

The literature on centrality measures---a key tool in social network analysis---predominantly concentrates on individual nodes~\citep{Brandes:Erlebach:2005}.
However, in practice, centrality is often used to select a \emph{group} of nodes that collectively serve a common goal~\citep{Kumar:etal:2019,Wan:etal:2021,Suri:Narahari:2008}.
Selecting nodes independently, one by one, ignores the synergies and interdependencies among them and leads to suboptimal solutions.

\emph{Group centrality measures} address this limitation by providing a more holistic assessment of groups of elements~\citep{Camur:Vogiatzis:2024,Kolaczyk:etal:2009}.
However, extending node centralities to groups is a nontrivial challenge.
For many centrality measures, it is unclear what the appropriate group-level generalization should be.
Two simple ideas are (a) to sum the centralities of individual nodes, and (b) to merge the group into a single node and compute its centrality in the resulting graph.
Both approaches have clear shortcomings: the sum approach ignores the co-dependencies among nodes, while the merge approach significantly alters the graph's structure.
See Figure~\ref{fig:two_paths} for an example.
For this reason, alternative extensions have been proposed for many centrality measures.
For example, for distance-based centralities---such as closeness centrality---several variants of the group definitions have been proposed based on the distance from the group to the closest, farthest, or average node \citep{Everett:Borgatti:1999}.

\begin{figure}[t]
\centering
\begin{tikzpicture}[x=6cm,y=6cm] 
  \tikzset{     
    e4c node/.style={circle,draw,minimum size=0.6cm,inner sep=0,font=\footnotesize}, 
    e4c edge/.style={sloped,above,font=\footnotesize}
  }
  \node[e4c node] (1) at (0.0, 0.2) {E}; 
  \node[e4c node] (2) at (0.2, 0.2) {}; 
  \node[e4c node] (3) at (0.4, 0.2) {A}; 
  \node[e4c node] (4) at (0.6, 0.2) {B}; 
  \node[e4c node] (5) at (0.8, 0.2) {}; 
  \node[e4c node] (6) at (1.0, 0.2) {}; 
  \node[e4c node] (7) at (0.0, 0.00) {F}; 
  \node[e4c node] (8) at (0.2, 0.00) {}; 
  \node[e4c node] (9) at (0.4, 0.00) {C}; 
  \node[e4c node] (10) at (0.6, 0.00) {D}; 
  \node[e4c node] (11) at (0.8, 0.00) {}; 
  \node[e4c node] (12) at (1.0, 0.00) {}; 

  \path[draw,thick]
  (1) edge[e4c edge]  (2)
  (2) edge[e4c edge]  (3)
  (3) edge[e4c edge]  (4)
  (4) edge[e4c edge]  (5)
  (5) edge[e4c edge]  (6)
  (7) edge[e4c edge]  (8)
  (8) edge[e4c edge]  (9)
  (9) edge[e4c edge]  (10)
  (10) edge[e4c edge]  (11)
  (11) edge[e4c edge]  (12)
  ;
\end{tikzpicture}
\caption{Consider the problem of selecting two nodes in the graph above, based on Flow Betweenness Centrality. The sum approach (which evaluates a group by the sum of its members' centralities) does not distinguish between the groups $\{A,B\}$ and $\{A,C\}$. In contrast, the merge approach (which evaluates a group by the centrality of the merged entity) assigns a high value to the group $\{E,F\}$---two peripheral and otherwise irrelevant nodes---since, after their artificial merger, the structure of the graph is substantially altered and they become positioned at the center of an extended path.}\label{fig:two_paths}
\end{figure}

\emph{Vitality indices}~\citep{Brandes:Erlebach:2005}, also referred to as \emph{induced centrality measures}~\citep{Everett:Borgatti:2010}, form an important class of centrality measures.
They are based on a natural idea: assessing a node's importance by evaluating the impact of its removal on the network.
As a result, they are well-suited to identifying the elements whose removal causes the greatest disruption to the network.
This is particularly relevant in robustness analysis of computer networks or the WWW~\citep{Albert:etal:2000} (e.g., designing fault-tolerant architectures, protecting vital nodes against cyber-attacks), but also transportation networks~\citep{Rosato:etal:2008} (e.g., designing redundant infrastructure) and social networks~\citep{Ressler:2006} (e.g., epidemic control, target prioritization in terrorist networks).
Several classical centralities---such as Degree Centrality and Flow Betweenness---fall within this category.
Moreover, this class includes many so-called \emph{game-theoretic centrality measures}, which are defined via coalitional games and often rely on the Shapley value~\citep{Lindelauf:etal:2013,Skibski:etal:2018:gtc}.

Against this background, in this paper, we conduct the very first comprehensive analysis of group vitality indices and address several fundamental questions:

\begin{itemize}
\item \emph{Existence and Uniqueness}:
We show that every vitality index admits a \emph{unique} extension to groups.
Specifically, we define a class of group vitality indices and prove that each vitality index corresponds to a unique group variant within this class.
To the best of our knowledge, this is the first general existence and uniqueness result for any broad class of centrality measures.

\item \emph{Axiomatic Characterization}:
We provide two axiomatic characterizations of group vitality indices, based on \emph{Group Balanced Contribution} and \emph{Vertex Exclusion}.
Our results also reveal a direct connection to a group version of the Shapley value.
In addition, we axiomatically characterize two specific measures---\emph{Group Degree} and \emph{Group Attachment Centrality}---which satisfy dual normalization conditions.
The axiomatic approach is fundamental in centrality analysis, as it highlights similarities and differences among the many measures proposed in the literature.

\item \emph{Computational Complexity}:
We show that, under mild assumptions on the underlying vitality index, finding a group that maximizes centrality is NP-hard.
However, computing the centrality of a given group is no harder than computing individual node centralities; more precisely, it is \emph{polynomial-time computable} whenever the node-level vitality index is polynomial-time computable.
We also develop an approximation algorithm for Group Attachment Centrality based on a novel sampling method for the group Shapley value.
This algorithm also supports efficient computation of node-level Attachment Centrality.
\end{itemize}

\paragraph{Related Work}
Our theoretical results contribute to an active line of research on the axiomatic foundations of centrality measures. 
This line was initiated by \citet{Sabidussi:1966} and \citet{Nieminen:1973} over 50 years ago, and is currently being studied mostly in the AI, electronic commerce, and social choice communities.

Notably, \citet{Altman:Tennenholtz:2005} proposed an axiomatization of the Seeley Index, also known as the \emph{simplified PageRank}, and \citet{Was:Skibski:2023:pagerank} developed an axiomatic characterization of PageRank in its full form.
In turn, \citet{Garg:2009} studied distance-based centralities, and \citet{Was:etal:2019:rwd} focused on random-walk-based ones.
More recently, in a more general approach, \citet{Bloch:etal:2023} developed a joint axiomatic framework for centrality measures.
Also, \citet{Boldi:Vigna:2014} analyzed the satisfiability of several axioms and concluded that Harmonic Centrality is the only popular centrality measure that satisfies all of them.

Most closely related are works that analyze vitality indices~\citep{Everett:Borgatti:2010,Skibski:2021:vitality} and game-theoretic centralities~\citep{Michalak:etal:2013:connectivity,Skibski:etal:2018:gtc}.
However, none of these studies focus on group centralities.
In fact, our work is the first to address the axiomatic characterization of group centrality measures.
We will discuss the most relevant results on vitality indices in the Preliminaries section.

Our algorithmic results are loosely related to research on the Shapley value in graph settings~\citep{Greco:etal:2020}.

\section{Preliminaries}

A graph is a pair $G = (V,E)$ such that $V$ is the set of nodes and $E$ is the set of edges, i.e., unordered pairs of nodes $\{u,v\} \subseteq V$.  
The set of all possible graphs is denoted by $\mathcal{G}$.

The \emph{degree} of a node is the number of its incident edges: $\deg(v) = |\{ e \in E : v \in e \}|$.  
A node $v$ is \emph{isolated} if $\deg(v) = 0$.  
We refer to nodes with degree one as \emph{leaf nodes}.

A graph is \emph{connected} if there exists a path between any two nodes.  
For a given set of nodes $S \subseteq V$, the subgraph induced by $S$ is the graph $G[S] = (S,E[S])$, where $E[S]$ is the set of all edges from $E$ between nodes in $S$.  
The subgraphs of $G$ induced by the nodes $V \setminus \{v\}$ and $V \setminus S$ are denoted by $G - v$ and $G - S$, respectively.

A maximal subset of nodes that induces a connected subgraph is called a \emph{component}.  
The set of all components of a graph $G$ is denoted by $K(G)$.

A (node) centrality measure $F$ is a function that, given a graph $G = (V,E)$ and a node $v \in V$, returns a real value $F_v(G)$ that corresponds to the node's importance---the higher the value, the more important the node.  
A \emph{group centrality measure} $F'$ assesses groups of nodes: given a graph $G = (V,E)$ and a group of nodes $S \subseteq V$, it returns a real value $F_S(G)$.  
We assume $F_{\emptyset}(G) = 0$.  

A group centrality measure $F'$ \emph{extends} a node centrality $F$ if $F'_{\{v\}}(G) = F_v(G)$ for every graph $G = (V,E)$ and node $v \in V$.  
Two simple extensions of a node centrality to groups are:
\begin{itemize}
\item \emph{sum approach}: $F'_S(G) = \sum_{v \in S} F_v(G)$.
\item \emph{merge approach}: $F'_S(G) = F_{[S]}(G_{[S]})$, where graph $G_{[S]}$ is obtained from $G$ by merging nodes from $S$ into one node: $[S]$.
\end{itemize}
Clearly, the sum approach does not provide any additional information.  
In turn, the merge approach is not always desirable, as it modifies the whole structure of the graph.

\subsection{Coalitional games}
Some of our results connect centrality analysis with cooperative game theory~\citep{Chalkiadakis:etal:2011}.  
A \emph{coalitional game} is a pair $(N,g)$ where $N$ is the set of players and $g \colon 2^N \rightarrow \mathbb{R}$ is a characteristic function that evaluates each \emph{coalition}, i.e., a group of players (with the assumption that $g(\emptyset) = 0$).  
A classic measure of the importance of individual players in a game is the \emph{Shapley value}: in a game $(N,g)$, the importance of a player $i \in N$ is given by:
\[
SV_i(N,g) = \sum_{\substack{T \subseteq N\\ i \in T}} \frac{(|T|\!-\!1)!(|N|\!-\!|T|)!}{|N|!} \big(g(T) - g(T \setminus \{i\})\big).
\]
This expression defines the value of a player as a weighted average of their marginal contributions across all possible coalitions.
The importance of the Shapley value stems from its axiomatic characterization. Specifically, \citet{Shapley:1953} proved that it is the unique method that satisfies several desirable axioms: \emph{Efficiency}, \emph{Symmetry}, \emph{Additivity}, and \emph{Null-player}.
More recently, the Shapley value has emerged as a key tool in explainable artificial intelligence, providing principled methods for attributing predictions to input features~\citep{Lundberg:Lee:2017,Fryer:etal:2021}.

The Shapley value has been also extended to groups.
One of the classic method was proposed by \citet{Marichal:etal:2007}:
\[ SV_S(N,g) \!=\! \sum_{T \subseteq N \setminus S} \frac{|T|! (|N|\!-\!|T|\!-\!|S|)!}{(|N|-|S|+1)!} (g(T \cup S) - g(T)). \]
This method, roughly speaking, corresponds to the merge approach in the graph context: it is the payoff of a player obtained by merging coalition $S$ into one player.
Alternative definition was recently proposed by \citep{Kepczynski:Skibski:2026:union} under the name \emph{Union Shapley value}:
\[
US_S(N,g) = \sum_{T \subseteq N} \frac{(|T| - 1)! (|N| - |T|)!}{|N|!} (g(T) - g(T \setminus S)).
\]
Clearly, for both definitions, if $S = \{i\}$, the group Shapley value coincides with the single-player definition.
Other group definitions include the Interaction Index \citep{Grabisch:Roubens:1999}.

\subsection{Vitality Indices}

A centrality measure $F$ is a \emph{vitality index} if there exists an invariant function $f \colon \mathcal{G} \rightarrow \mathbb{R}$ such that $F_v(G) = f(G) - f(G - v)$.  
Two classic vitality indices are:
\begin{description}
\item[Degree Centrality:] $D_v(G) = \deg(v)$ is a vitality index with the invariant function $f(G) = |E|$ for every $G = (V,E)$.
\item[Flow Betweenness Centrality \citep{Freeman:etal:1991}:]\ \\ $FB_v(G) = \sum_{s,t \in V} \big(\text{flow}_{s,t}(G) - \text{flow}_{s,t}(G-v)\big)$, where $\text{flow}_{s,t}(G)$ is the number of edge-disjoint paths from $s$ to $t$ in $G$.
\end{description}

Another group of vitality indices is formed by \emph{game-theoretic centralities}.  
These are measures defined via coalitional games, almost uniformly using the Shapley value.  
The idea is to define a coalitional game based on a graph---where nodes are players---and then compute the Shapley value to obtain the importance of each node in the game, and thus in the graph.  
Examples include:

\begin{description}
\item[Connectivity Centrality~\citep{Amer:Gimenez:2004}:] $CN_v(G) = SV_v(V,g^{CN})$, where $g^{CN}(T) = 1$ if $G[T]$ is connected, and $g^{CN}(T) = 0$ otherwise, for every $T \subseteq V$.
\item[Attachment Centrality~\citep{Skibski:etal:2019:attachment}:] $A_v(G) = SV_v(V,g^A)$, where $g^A(T) = 2(|T| - |K(G[T])|)$ for every $T \subseteq V$.
\end{description}

Both of these game-theoretic centralities are \emph{induced}, meaning that the value of a coalition $T$ in the game depends solely on the subgraph $G[T]$.  
\citet{Skibski:2021:vitality} proved that vitality indices are equivalent to induced game-theoretic centralities.  

Formally, every vitality index $F$ for every graph $G = (V,E)$ and $v \in V$ satisfies 
\[ F_v(G) = SV_v(V,g^*), \mbox{ where } g^*(T) = \sum_{v \in T} F_v(G[T]).\] 
Hence, it is an induced game-theoretic centrality.  
Conversely, for every induced game-theoretic centrality, there exists a function $f$ such that $F_v(G) = f(G) - f(G - v)$, which implies that it is a vitality index.

This result follows from the axiomatic characterization of vitality indices~\citep{Skibski:2021:vitality}.  
Specifically, a centrality measure is a vitality index if and only if it satisfies \emph{Balanced Contribution}---a property defined by \citet{Myerson:1980} in game-theoretic literature:
\begin{description}
\item[Balanced Contributions:] For every graph $G = (V,E)$ and two nodes $u,v \in V$, it holds that:
\[
F_v(G) - F_v(G - u) = F_u(G) - F_u(G - v).
\]
\end{description}
In particular, as a result of the Shapley value's axiomatic characterization, every vitality index satisfies the axiom of \emph{Fairness}.
This property in the network context means that adding an edge increases the centrality of both its endpoints equally.

\section{Group Vitality Indices}

In this section, we introduce the group variant of vitality indices and present several important results concerning it.

Group vitality indices have a natural definition.  
Let $F$ be an arbitrary group centrality measure.  
We say it is a \emph{group vitality index} if there exists an invariant function $f: \mathcal{G} \rightarrow \mathbb{R}$ such that $F_S(G) = f(G) - f(G - S)$.

We begin with a simple but important observation.  
If $F$ is a (node) vitality index with an invariant function $f$, then a group vitality index $F'$ with the same function $f$ naturally extends $F$.
Clearly, this is not the only group centrality that extends $F$, as centralities obtained through the sum and merge approach also extend it.
However, as we now show, it is the only \emph{group vitality index} that extends $F$.

\begin{theorem}\label{theorem:one_extends}
For a vitality index, there exists exactly one group vitality index that extends it.
\end{theorem}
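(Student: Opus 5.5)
Existence is immediate from the observation preceding the statement. If $F$ is a vitality index with invariant function $f$, set $F'_S(G) = f(G) - f(G - S)$. This is a group vitality index by definition. It extends $F$ because $G - \{v\} = G - v$. It also satisfies $F'_\emptyset(G) = f(G) - f(G) = 0$, consistent with the convention $F_\emptyset(G)=0$. The real content is uniqueness. The invariant function $f$ of a vitality index is not itself unique, so the plan is to show that the group values $F'_S(G)$ are nonetheless determined by the node values $F$ alone.

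Let $F'$ be any group vitality index extending $F$, with invariant function $h$, so that $F'_S(G) = h(G) - h(G - S)$. The key step is a telescoping decomposition. Enumerate $S = \{v_1, \dots, v_k\}$ in any order and put $G_0 = G$, $G_i = G - \{v_1,\dots,v_i\}$. Then
\[
F'_S(G) = \sum_{i=1}^{k} \big(h(G_{i-1}) - h(G_{i-1} - v_i)\big) = \sum_{i=1}^{k} F'_{\{v_i\}}(G_{i-1}) = \sum_{i=1}^{k} F_{v_i}(G_{i-1}).
\]
Here the second equality applies the definition of a group vitality index to the graph $G_{i-1}$ and the singleton $\{v_i\}$. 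The third uses that $F'$ extends $F$ on every graph, in particular on the induced subgraphs $G_{i-1}$.

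The right-hand side depends only on $F$, and not on $h$ or on $F'$. Hence any two group vitality indices extending $F$ agree on every pair $(G,S)$, which proves uniqueness. Note that the choice of ordering of $S$ is irrelevant: the left-hand side is order-free, so the sum must be as well. (Equivalently, order-independence follows from Balanced Contributions, but we do not need it.)

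I do not expect a real obstacle. The one point to state carefully is that "extends" is quantified over all graphs, including the subgraphs $G_{i-1}$, and that $h$ is invariant and defined on all of $\mathcal{G}$, so both may be applied to these subgraphs. It may also be worth remarking that the telescoping formula gives a concrete expression for the unique extension, which will presumably be reused later in the paper.
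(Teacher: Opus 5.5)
Your proof is correct and takes essentially the paper's approach. The paper argues by a minimal counterexample on $|V|$ using the one-step identity $F'_S(G) = F'_{\{v\}}(G) + F'_{S\setminus\{v\}}(G-v)$, which is your telescoping sum $F'_S(G)=\sum_{i} F_{v_i}(G_{i-1})$ run as an induction. You argue directly rather than by contradiction, and you state existence explicitly, which the paper only notes in the remark before the theorem.
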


\begin{proof}
Fix a vitality index $F$.  
Assume there are two centrality indices $F'$ and $F''$ with invariant functions $f'$ and $f''$, respectively, that extend $F$.  
Let $G$ be the graph with the smallest number of vertices where $F'$ and $F''$ differ, and let $S$ be the group on which they differ.  
Group $S$ has at least two nodes, as both centralities agree on single nodes and assign zero centrality to the empty group.  
Take $v \in S$. We get that:
\begin{align*}
F'_S(G) &= f'(G) - f'(G - S) \\
&= f'(G) - f'(G - v) + f'(G - v) - f'(G - S) \\
&= F'_{\{v\}}(G) + F'_{S \setminus \{v\}}(G - v).
\end{align*}
Similarly, $F''_S(G) = F''_{\{v\}}(G) + F''_{S \setminus \{v\}}(G - v)$.  
However, graph $G - v$ has fewer vertices than $G$, which implies $F'_{S \setminus \{v\}}(G - v) = F''_{S \setminus \{v\}}(G - v)$, which combined with the fact that $F'_{\{v\}}(G) = F''_{\{v\}}(G)$ leads to a contradiction.
\end{proof}

To give an example, consider Degree Centrality.  
We have multiple options to extend it to groups:
\begin{itemize}
\item using the sum approach we can sum degrees of all its members: $F_S(G) = \sum_{v \in S} \deg(v)$; in this way, we count all edges between $S$ and $V \setminus S$ once, and edges between nodes from $S$ twice;
\item using the merge approach, we can count the neighbours of the group $S$, ignoring elements from $S$; thus, we do not count edges between members of $S$ and count each edge to a node $v$ from $V \setminus S$ only once.
\end{itemize}

In turn, Theorem~\ref{theorem:one_extends} states that if we look at Degree Centrality from the perspective of a vitality index, then there is only one group vitality index that extends it.

\begin{definition}
\emph{Group (Vitality) Degree Centrality} is defined as
\begin{equation}\label{eq:group_degree}
D_S(G) = |E| - |E[V \setminus S]|
\end{equation}
for every graph $G = (V,E)$ and every group $S \subseteq V$.
\end{definition}
Thus, we count each edge incident to nodes from $S$ exactly once.
See Figure~\ref{fig:degree} for an illustration.

Consider now another vitality index, \emph{Attachment Centrality}, defined as the Shapley value of the cooperative game $g^{A}(T) = 2\bigl(|T| - |K(G[T])|\bigr)$.
It can be extended to groups using both standard approaches, as well as by replacing the Shapley value with one of the existing group variants of the Shapley value.
However, Theorem~\ref{theorem:one_extends} indicates that there is only one extension that qualifies as a group vitality index.
To identify it, we rely on the invariant function.

\begin{definition}
\emph{Group Attachment Centrality} is defined as
\begin{equation*}
A_S(G) = f^{A}(G) - f^{A}(G - S),
\end{equation*}
where $f^{A}(G)$ is the invariant function of (node-level) Attachment Centrality, defined for every graph $G = (V,E)$ as
\begin{equation*}
f^{A}(G)
= \sum_{S \subseteq V}
\frac{(|S| - 1)! (|V| - |S|)!}{|V|!}
\cdot 2\bigl(|S| - |K(G[S])|\bigr).
\end{equation*}
\end{definition}
By reformulation, $f^{A}$ can be interpreted as a network-level metric that captures how well connected the graph is:
\[
f^{A}(G)
= \frac{1}{|V|}
\sum_{s = 1}^{|V|}
\frac{1}{\binom{|V|}{s}}
\sum_{\substack{S \subseteq V \\ |S| = s}}
\frac{2|V|}{s}\,\bigl(s - |K(G[S])|\bigr).
\]
Hence, this function computes an average over all subgraph sizes.
For each size $s$, it averages over all induced subgraphs of size $s$, and for each such subgraph evaluates a quantity that depends on the number of connected components relative to its size.
This value ranges from $0$, when all nodes in the subgraph are isolated, to $2|V|(s-1)/s$, when the subgraph is fully connected.

\begin{figure}[tb!]
\centering
\begin{tikzpicture}[x=3cm,y=3cm]
  \tikzset{     
    e4c node/.style={circle,draw,minimum size=0.65cm,inner sep=0,font=\small}, 
    e4c edge/.style={sloped,above,font=\footnotesize}
  }
  \def\r{1}                 
  \def\h{0.866}             

  \node[e4c node] (1) at (0.5, 1.30) {1}; 
  \node[e4c node] (2) at (0.5, 1.00) {2}; 
  \node[e4c node] (3) at (0.33, 0.67) {3}; 
  \node[e4c node] (4) at (0.66, 0.67) {4}; 
  \node[e4c node] (5) at (-0.30, 0.00) {5}; 
  \node[e4c node] (6) at (0.00, 0.00) {6}; 
  \node[e4c node] (7) at (0.17, 0.33) {7}; 
  \node[e4c node] (8) at (0.33, 0.0) {8}; 
  \node[e4c node] (9) at (1.30, 0.00) {9}; 
  \node[e4c node] (10) at (1.00, 0.00) {10}; 
  \node[e4c node] (11) at (0.67, 0.00) {11}; 
  \node[e4c node] (12) at (0.83, 0.33) {12}; 

  \path[draw,thick]
  (1) edge[e4c edge]  (2)
  (2) edge[e4c edge]  (3)
  (2) edge[e4c edge]  (4)
  (3) edge[e4c edge]  (4)
  (5) edge[e4c edge]  (6)
  (6) edge[e4c edge]  (7)
  (6) edge[e4c edge]  (8)
  (7) edge[e4c edge]  (8)
  (9) edge[e4c edge]  (10)
  (10) edge[e4c edge]  (11)
  (10) edge[e4c edge]  (12)
  (11) edge[e4c edge]  (12)
  (3) edge[e4c edge]  (7)
  (8) edge[e4c edge]  (11)
  (12) edge[e4c edge]  (4)
  ;
  
\end{tikzpicture}
\caption{Consider three groups: $A = \{2,3,4\}$, $B = \{4,7,11\}$, and $C = \{2,6,10\}$.  
Degree Centrality extended using the sum approach does not differentiate between these groups.  
When extended using the merge approach, common neighbors are avoided, making group $C$ the best.  
In contrast, Group Degree Centrality (see Equation~\eqref{eq:group_degree}) only avoids shared edges; therefore, groups $B$ and $C$ are better than $A$, but equally good.  
Group Attachment Centrality identifies group $C$ as the best group of size 3 in the graph, as its removal splits the graph into four components.}
\label{fig:degree}
\end{figure}

To axiomatically characterize the class of group vitality indices, we introduce two axioms.  
The first, \emph{Group Balanced Contribution}, is a natural extension of Balanced Contributions~\citep{Myerson:1980} to groups.
It states that the removal of a group $S$ affects a group $T$ in the same way that the removal of $T$ affects $S$.  
This axiom can be interpreted as a stability condition, familiar from the Shapley value axiomatization: if the group assessment represents a payoff or award, the ``threat'' posed by one group on the other is \emph{balanced} by the reciprocal threat.  
Moreover, it implies \emph{Fairness}, which requires that adding an edge affects both of its endpoints in the same way.

\begin{description}
\item[Group Balanced Contributions:] For every graph $G = (V,E)$ and two groups $S,T \subseteq V$, it holds that
\[
F_S(G) - F_{S \setminus T}(G - T) = F_T(G) - F_{T \setminus S}(G - S).
\]
\end{description}

The second axiom, \emph{Vertex Exclusion}, is novel and states that the marginal contribution of a node to a group in a graph is equal to its centrality in that graph.  
This axiom provides a natural approach for extending node-level measures to groups, distinct from the standard sum and merge approaches: the value of a group can be computed by selecting an arbitrary node, adding its centrality in the current graph, and then removing it; the process is repeated with the remaining nodes until the group is exhausted.  
Interestingly, while this procedure may generally depend on the order of node selection, it is order-independent in the case of vitality indices.

\begin{description}
\item[Vertex Exclusion:] For every graph $G = (V,E)$, group $S \subseteq V$, and $v \in S$, it holds that
\[
F_S(G) = F_{\{v\}}(G) + F_{S \setminus \{v\}}(G - v).
\]
\end{description}

The following theorem states that each of these axioms uniquely characterizes the whole class.

\begin{theorem}\label{theorem:axioms}
Given a group centrality measure $F$, the following three statements are equivalent:
\begin{itemize}
\item[(a)] $F$ is a group vitality index;
\item[(b)] $F$ satisfies Vertex Exclusion;
\item[(c)] $F$ satisfies Group Balanced Contributions.
\end{itemize}
\end{theorem}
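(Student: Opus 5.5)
I will prove the cycle of implications (a)$\Rightarrow$(c)$\Rightarrow$(b)$\Rightarrow$(a). Throughout I use the convention $F_{\emptyset}(G)=0$ and the identity $(G-T)-(S\setminus T) = G-(S\cup T) = (G-S)-(T\setminus S)$, since both sides are the subgraph of $G$ induced by $V\setminus(S\cup T)$.

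\emph{(a)$\Rightarrow$(c).} Assume $F_S(G)=f(G)-f(G-S)$. Then
\[
F_S(G)-F_{S\setminus T}(G-T) = f(G)-f(G-S)-f(G-T)+f(G-(S\cup T)).
\]
This expression is symmetric in $S$ and $T$, so it also equals $F_T(G)-F_{T\setminus S}(G-S)$. This gives Group Balanced Contributions.

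\emph{(c)$\Rightarrow$(b).} Apply Group Balanced Contributions with $T=\{v\}$ for some $v\in S$. This yields
\[
F_S(G)-F_{S\setminus\{v\}}(G-v)=F_{\{v\}}(G)-F_{\emptyset}(G-S).
\]
The last term is zero by convention, and rearranging gives exactly Vertex Exclusion.

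\emph{(b)$\Rightarrow$(a).} This is the substantive direction. The natural invariant is $f(G)=F_V(G)$ for $G=(V,E)$, the value of removing every node. I will show $F_S(G)=F_V(G)-F_{V\setminus S}(G-S)$, which is $f(G)-f(G-S)$ because $G-S$ has node set $V\setminus S$.

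To prove this, first iterate Vertex Exclusion along an arbitrary ordering $v_1,\dots,v_k$ of a group $S$. By induction on $|S|$ this gives
\[
F_S(G)=\sum_{i=1}^k F_{\{v_i\}}\bigl(G-\{v_1,\dots,v_{i-1}\}\bigr).
\]
Now order $V$ by listing the nodes of $S$ first and then those of $V\setminus S$. The sum for $F_V(G)$ splits into the terms for $S$, which sum to $F_S(G)$, and the remaining terms, which are exactly the expansion of $F_{V\setminus S}(G-S)$. Hence $F_V(G)=F_S(G)+F_{V\setminus S}(G-S)$, as claimed.

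Two points need checking. First, $f$ must be invariant, meaning it depends only on the graph up to isomorphism. I will take the paper's convention that centrality measures are isomorphism-invariant, so $F_V(G)$ is invariant. Second, Vertex Exclusion already holds for every choice of $v$, so the value of the expansion cannot depend on the ordering. No separate consistency argument is needed, which is why I expect no real obstacle in this step.
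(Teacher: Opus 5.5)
Your proof is correct, and it takes a more direct route than the paper. The paper proves the cycle (a)$\Rightarrow$(b)$\Rightarrow$(c)$\Rightarrow$(a):
\begin{itemize}
\item (a)$\Rightarrow$(b) is a one-line telescoping of $f$.
\item (b)$\Rightarrow$(c) iterates Vertex Exclusion to get $F_{S\cup T}(G)=F_T(G)+F_{S\setminus T}(G-T)$ and then symmetrizes.
\item (c)$\Rightarrow$(a) goes through Lemma~\ref{lemma:union_shapley_form}. This is an induction on $|V|$ and $|S|$ establishing the Union Shapley form $F_S(G)=\sum_{T\subseteq V}\frac{(|T|-1)!(|V|-|T|)!}{|V|!}\bigl(g^*(T)-g^*(T\setminus S)\bigr)$. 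The paper then defines $f$ as the Shapley-weighted sum of $g^*$ and checks $F_S(G)=f(G)-f(G-S)$ by a combinatorial identity.
\end{itemize}

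You instead note two things. First, Group Balanced Contributions with $T=\{v\}$ is literally Vertex Exclusion; the paper uses this same step inside the lemma's proof. Second, Vertex Exclusion telescopes to $F_V(G)=F_S(G)+F_{V\setminus S}(G-S)$, so $f(G)=F_V(G)$ works. This is in fact the paper's invariant function: setting $S=V$ in the lemma gives $F_V(G)=\sum_{T\subseteq V}\frac{(|T|-1)!(|V|-|T|)!}{|V|!}g^*(T)$. Moreover, any admissible $f$ equals $F_V(G)+f\bigl((\emptyset,\emptyset)\bigr)$, so your choice is canonical.

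What the paper's longer route buys is the Union Shapley representation (Corollary~\ref{corollary:union_shapley}). It is reused in the proofs of Theorems~\ref{theorem:axioms_attachment} and~\ref{theorem:axioms_degree}. Your argument proves the equivalence without any computation but does not yield that formula.

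Finally, your caveat about isomorphism invariance is unnecessary. The paper uses ``invariant function'' simply for a graph-level function $f\colon\mathcal{G}\to\mathbb{R}$ and introduces anonymity only later, as a separate assumption. Had anonymity been required, the paper's $f$ would need the same assumption on $F$ as yours.
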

\begin{proof}[Proof (sketch)]
It is immediate that $(a) \Rightarrow (b)$. 

To prove $(b) \Rightarrow (c)$, observe that
\[ F_T(G) = \sum_{i=1}^{t} F_{v_i}\bigl(G - \{v_1,\dots,v_{i-1}\}\bigr), \]
which implies
\[ F_{S \cup T}(G) = F_T(G) + F_{S \setminus T}(G - S), \]
by applying Vertex Exclusion $|T|$ times.
By symmetry of $S$ and $T$, we obtain Group Balanced Contribution.

Finally, the proof of $(c) \Rightarrow (a)$ relies on the Shapley value-like form of measures that satisfy Balanced Contributions.
\begin{lemma}\label{lemma:union_shapley_form}
If a group vitality index $F$ satisfies Group Balanced Contributions, then it satisfies:
\[
F_S(G) = \sum_{T \subseteq V} \frac{(|T| - 1)! (|V| - |T|)!}{|V|!} (g^*(T) - g^*(T \setminus S)).
\]
for every graph $G = (V,E)$ and $S \subseteq V$, where $g^*(T) = \sum_{v \in T} F_v(G[T])$ for every $T \subseteq V$.
\end{lemma}
The full proof can be found in the appendix.
\end{proof}

Lemma~\ref{lemma:union_shapley_form} reveals a connection to coalitional games.  
Specifically, out of several extensions of the Shapley value to groups, only one---the Union Shapley value~\citep{Kepczynski:Skibski:2026:union}---adheres to the definition of group vitality indices.  
From Lemma~\ref{lemma:union_shapley_form}, we obtain the following result, analogous to the result for (node) vitality indices.

\begin{corollary}\label{corollary:union_shapley}
For every group vitality index $F$, every graph $G = (V,E)$ and coalition $S \subseteq V$ it holds:
\[ F_S(G) = US_S(V,g^*), \mbox{ where }g^*(T) = \sum_{v \in T} F_v(G[T]). \]
\end{corollary}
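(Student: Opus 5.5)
The corollary is essentially Lemma~\ref{lemma:union_shapley_form} rewritten in the notation of the Union Shapley value. The plan is therefore to establish the hypothesis of the lemma and then match the two formulas term by term.

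First, fix a group vitality index $F$ with invariant function $f$, a graph $G=(V,E)$ and a coalition $S\subseteq V$. By Theorem~\ref{theorem:axioms}, the implication (a)$\Rightarrow$(c) shows that $F$ satisfies Group Balanced Contributions. Its proof runs through (b): Vertex Exclusion follows directly from $F_S(G)=f(G)-f(G-S)$. Alternatively, Group Balanced Contributions can be checked directly, since both sides equal $f(G)-f(G-(S\cup T))$. Indeed, $F_S(G)-F_{S\setminus T}(G-T) = f(G)-f(G-S)-f(G-T)+f(G-(S\cup T))$, which is symmetric in $S$ and $T$.

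Second, Lemma~\ref{lemma:union_shapley_form} now applies and gives
\[
F_S(G)=\sum_{T\subseteq V}\frac{(|T|-1)!\,(|V|-|T|)!}{|V|!}\bigl(g^*(T)-g^*(T\setminus S)\bigr), \qquad g^*(T)=\sum_{v\in T}F_v(G[T]).
\]
Third, we compare this with the definition of $US_S(N,g)$ for $N=V$ and $g=g^*$; the two expressions coincide syntactically. The only bookkeeping points are these:
\begin{itemize}
\item $g^*(\emptyset)=0$ (empty sum), so $(V,g^*)$ is a legitimate coalitional game;
\item the $T=\emptyset$ term is formally $(-1)!$ but has zero marginal contribution $g^*(\emptyset)-g^*(\emptyset)=0$, so it is dropped in both formulas alike.
\end{itemize}
Here $F_v$ means $F_{\{v\}}$, which is the node vitality index that $F$ extends.

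There is no real obstacle here, as all substantive work lives in Lemma~\ref{lemma:union_shapley_form}. The one point to check carefully is that the lemma's hypothesis ``group vitality index satisfying Group Balanced Contributions'' is met, which the first step covers.
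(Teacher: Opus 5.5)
Your proposal is correct and follows the paper's route: the paper also calls the corollary immediate from Theorem~\ref{theorem:axioms}, which gives Group Balanced Contributions, and Lemma~\ref{lemma:union_shapley_form}, whose formula is exactly $US_S(V,g^*)$. One small wording slip: each side of Group Balanced Contributions equals $f(G)-f(G-S)-f(G-T)+f(G-(S\cup T))$, not $f(G)-f(G-(S\cup T))$ (that value is $F_S(G)+F_{T\setminus S}(G-S)$), but your displayed computation after ``Indeed'' is the right one and suffices.
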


\begin{proof}
Immediate from Theorem~\ref{theorem:axioms} and Lemma~\ref{lemma:union_shapley_form}.
\end{proof}

\section{Group Degree and Attachment: Two Normalized Measures}

In this section, we will provide axiomatizations of two dual group vitality indices.
We begin by presenting two natural axioms that concern nodes with at most one edge.

\begin{description}
\item[Isolated Node] For every graph $G = (V,E)$ and an isolated node $v \in V$ it holds $F_v(G) = 0$.
\item[Leaf Node] For every graphs $G = (V,E)$, $G'=(V',E')$ and nodes $v \in V$, $v' \in V$ with $\deg(v) = 1 = \deg(v')$ it holds $F_v(G) = F_{v'}(G')$.
\end{description}

If a node is isolated, it clearly has zero importance; the Isolated Node axiom states its centrality is zero which is consistent with the vast majority of centrality measures.
Furthermore, if a node has only one edge, then it has a minimal impact on the structure of the network.
The Leaf Node axiom states such nodes always have the same centrality.
For example, for Betweenness Centrality we have $c = 0$, and for Degree Centrality: $c = 1$.
However, for some measures, e.g., for PageRank, the centrality of such nodes depend on the position in the network of their neighbours.

Now, consider an arbitrary connected component of a graph.
What should be the sum of centralities of its elements?
One approach would be to say the sum of centralities in each component should be normalized so it does depend only on the size of the component. 
In turn, an alternative approach would be to say the sum of centralities in each component should depend not on the number of nodes, but on the number of edges.
These two ideas lead to the following axioms.

\begin{description}
\item[Node-Normalization] There exists a function $h: \mathbb{N} \rightarrow \mathbb{R}$ such that for every graph $G = (V,E)$ and every connected component $C \in K(G)$ it holds: $\sum_{v \in C} F_v(G) = h(|C|)$.
\item[Edge-Normalization] There exists a function $h: \mathbb{N} \rightarrow \mathbb{R}$ such that for every graph $G = (V,E)$ and every connected component $C \in K(G)$ it holds: $\sum_{v \in C} F_v(G) = h(|E[C]|)$.
\end{description}

We will now show that for each type of normalization axiom there exists exactly one centrality measure (up to a scalar multiplication).

\begin{theorem}\label{theorem:axioms_attachment}
A group vitality index satisfies Isolated Node, Leaf Node and Node-Normalization if and only if it is Group Attachment Centrality (up to a scalar multiplication).
\end{theorem}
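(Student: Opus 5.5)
Since every group vitality index is determined by its singleton values (Theorem~\ref{theorem:one_extends}), the plan is to work entirely at the node level. We show that a node vitality index satisfying Isolated Node, Leaf Node and Node-Normalization must equal $c\cdot A$ for some constant $c$. Then Theorem~\ref{theorem:one_extends} lifts this identity to groups, because $c\cdot A_S$ is the group vitality index with invariant function $c f^A$.

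\textbf{Sufficiency.} We check that Attachment Centrality satisfies the three axioms.
\begin{itemize}
\item Isolated Node: an isolated node $v$ is a null player in $g^A$, since adding it raises $|T|$ and $|K(G[T])|$ by one each. Hence $A_v=0$.
\item Node-Normalization: $g^A$ is additive across components, so the Shapley value on a component $C$ equals the Shapley value of the restricted game. By Efficiency the values on $C$ sum to $g^A(C)=2(|C|-1)$, which depends only on $|C|$.
\item Leaf Node: let $v$ have unique neighbour $u$. The marginal contribution of $v$ to $T\setminus\{v\}$ is $2$ if $u\in T$ and $0$ otherwise. Averaging over random orders, $u$ precedes $v$ with probability $1/2$, so $A_v=1$ regardless of the graph.
\end{itemize}

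\textbf{Necessity.} Let $F$ be a node vitality index satisfying the axioms, and fix the leaf constant $c$. We prove $F=cA$ by induction on the number of edges, and within that on the number of nodes, using Balanced Contributions.
\begin{itemize}
\item Base cases. Edgeless graphs give $0$ by Isolated Node. For any graph, a node's value depends only on its own component: take $u$ in another component and apply Balanced Contributions inductively. Node-Normalization with the single-edge graph $K_2$ gives $h(2)=2c$. More generally, evaluating on a tree or path of size $n$, all of whose nodes we can handle inductively, gives $h(n)=2c(n-1)$, matching $A$.
\item Inductive step. Take a connected component $C$ with $n$ nodes. For any $u,v\in C$, Balanced Contributions gives
\[
F_v(G)-F_u(G)=F_v(G-u)-F_u(G-v).
\]
The right-hand side involves graphs with fewer nodes, where $F=cA$ by induction. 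So all pairwise differences of $F$ on $C$ coincide with those of $cA$. Together with $\sum_{v\in C}F_v(G)=h(n)=\sum_{v\in C}cA_v(G)$, this yields $F_v(G)=cA_v(G)$ for every $v\in C$.
\end{itemize}
This is the standard Myerson-style uniqueness argument.

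\textbf{Main obstacle.} The delicate part is pinning down the normalization function $h$ using only the axioms. Node-Normalization alone leaves $h$ free. We must show $h(n)=2c(n-1)$ by computing $F$ on one convenient connected graph of each size, where Leaf Node and induction determine every value.
\begin{itemize}
\item Use the star $K_{1,n-1}$. Its $n-1$ leaves each have value $c$.
\item The centre's value follows from Balanced Contributions with a leaf $\ell$:
\[
F_{\text{centre}}(G)=c+F_{\text{centre}}(G-\ell)-F_\ell(G-\text{centre}).
\]
Here $F_\ell(G-\text{centre})=0$, since $\ell$ becomes isolated. Induction on the smaller star then gives $F_{\text{centre}}=c(n-1)$.
\item Summing, $h(n)=(n-1)c+c(n-1)=2c(n-1)$, which is consistent with Efficiency of $A$.
\end{itemize}
The step $n=1$ needs care: an isolated node has $h(1)=0$, which agrees. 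Once $h$ is fixed, the induction above closes, and Theorem~\ref{theorem:one_extends} transfers the node-level equality $F=cA$ to all groups.
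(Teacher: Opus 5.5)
Your proof is correct. The first half, which pins down $h(n)=2c(n-1)$ by Balanced Contributions on the star $K_{1,n-1}$, is exactly the paper's Lemma~\ref{lemma:star} together with its normalization step. The second half takes a different route.

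The paper stays at the group level. It converts $h$ into $g^*(T)=\sum_{v\in T}F_v(G[T])=2c\,(|T|-|K(G[T])|)$ for every induced subgraph and plugs this into the Union-Shapley representation of Lemma~\ref{lemma:union_shapley_form}. It then uses the identity $\sum_{R\subseteq S}\frac{(|T|+|R|-1)!\,(|V|-|T|-|R|)!}{|V|!}=\frac{(|T|-1)!\,(|V|-|S|-|T|)!}{(|V|-|S|)!}$ to recognize the result as $c\,(f^A(G)-f^A(G-S))$.

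You instead work entirely at the node level with a Myerson-type induction. Balanced Contributions fixes the pairwise differences, and component sums fix the level. You then lift to groups using the uniqueness of extensions from Theorem~\ref{theorem:one_extends}.

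What each route buys:
\begin{itemize}
\item Yours avoids Lemma~\ref{lemma:union_shapley_form} and the weight computation entirely. It also makes clear that the group statement is a formal consequence of the node statement.
\item Yours identifies $F$ by comparison with a known solution. It relies on $cA$ being a vitality index with component sums $2c(|C|-1)$, which you establish in your sufficiency part.
\item The paper's computation derives the explicit group formula directly from the axioms, with no reference solution needed.
\end{itemize}

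Two small remarks. First, the base-case claim that a node's value depends only on its own component is not justified as written, and it is not needed. In fact, your induction step applied to any $u,v\in V$ shows that $F_v(G)-cA_v(G)=F_u(G)-cA_u(G)$, so $F-cA$ is constant on all of $V$. A single component sum then forces that constant to be $0$. Second, induction on the number of nodes alone suffices, since $G-u$ and $G-v$ always have fewer nodes; the outer induction on edges is superfluous.
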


\begin{theorem}\label{theorem:axioms_degree}
A group vitality index satisfies Isolated Node, Leaf Node and Edge-Normalization if and only if it is Group Degree Centrality (up to a scalar multiplication).
\end{theorem}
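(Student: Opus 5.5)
By Theorem~\ref{theorem:one_extends}, a group vitality index is determined by its restriction to single nodes. It therefore suffices to show that a (node) vitality index $F$ satisfying Isolated Node, Leaf Node and Edge-Normalization equals $c\cdot D$ for some constant $c$; the group version then coincides with $c$ times Group Degree Centrality. We read the axioms on singletons $F_{\{v\}}$.

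\emph{Sufficiency.} Group Degree Centrality is a group vitality index with $f(G)=|E|$. Its node values are $\deg(v)$, so it satisfies the three axioms:
\begin{itemize}
\item isolated nodes get $0$;
\item leaves get $1$;
\item $\sum_{v\in C}\deg(v)=2|E[C]|$, i.e., $h(m)=2m$.
\end{itemize}
Scaling by $c$ preserves all of these.

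\emph{Necessity.} Let $c$ be the common leaf value. We prove $F_v(G)=c\deg(v)$ by induction on $|E|$, using Balanced Contributions (which follows from Group Balanced Contributions with singletons, via Theorem~\ref{theorem:axioms}) and Edge-Normalization.
\begin{itemize}
\item \emph{Base cases.} If $|E|=0$, every node is isolated, so all values are $0$.
\item \emph{Determining $h$.} A single edge gives $h(1)=2c$ and $h(0)=0$. To pin down $h(m)=2cm$ in general, evaluate a star with $m$ leaves. The leaves contribute $mc$. For the center $u$, Balanced Contributions with a leaf $w$ gives
\[F_u(G)-F_u(G-w)=F_w(G)-F_w(G-u)=c-0,\]
so inductively $F_u=mc$. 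Hence $h(m)=2mc$.
\item \emph{Inductive step.} Take $G$ with at least one edge and a component $C$. For any $u\in C$ and any $v$, Balanced Contributions gives
\[F_u(G)=F_u(G-v)+F_v(G)-F_v(G-u).\]
This alone does not immediately reduce the edge count in a usable way. A cleaner route is to use the fact that a vitality index is determined by its invariant function up to isolated-node-invariant terms. Instead, we use Fairness, which is implied by Balanced Contributions via the Shapley representation of \citet{Skibski:2021:vitality}: adding edge $\{u,v\}$ changes $F_u$ and $F_v$ by the same amount $\delta$. Removing an edge $e=\{u,v\}$ from $C$ and applying induction to $G-e$, the nodes other than $u,v$ are not yet controlled, so this needs more care.
\item \emph{Component-wise argument.} We argue on the Shapley form instead. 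Since $F_v(G)=SV_v(V,g^*)$ with $g^*(T)=\sum_{w\in T}F_w(G[T])$, Edge-Normalization gives
\[g^*(T)=\sum_{C\in K(G[T])}h(|E[C]|)=2c\,|E[T]|\]
for all $T$, using the formula for $h$ derived above. Then $F_v(G)=SV_v(V,2c|E[\cdot]|)=c\deg(v)$, because each edge is a unanimity-type game worth $2c$, split equally between its two endpoints.
\end{itemize}
This last bullet is the key step. The only remaining point is that $h(m)=2cm$ holds for \emph{every} connected graph with $m$ edges, not just stars; but $h$ depends only on $m$, so the star computation already determines it.

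\emph{Main obstacle.} The delicate part is the star computation. It must run as a clean induction on the number of leaves: $G-w$ is a smaller star, and the leaf isolated by removing the center has value $0$ by Isolated Node. Care is also needed with the $m=1$ case, where the "center" is itself a leaf of value $c$, and with the $m=0$ case, where $h(0)=0$ from a single isolated node.
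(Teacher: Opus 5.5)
Your proposal is correct and follows essentially the paper's route. The star computation (the paper's Lemma~\ref{lemma:star}) pins down $h(m)=2cm$, Edge-Normalization then gives $g^*(T)=2c\,|E[T]|$ for every $T$, and the Shapley-type representation finishes the proof. The only difference is at the last step: you evaluate the ordinary Shapley value to get $F_{\{v\}}(G)=c\deg(v)$ and lift to groups via the uniqueness in Theorem~\ref{theorem:one_extends}, whereas the paper substitutes $g^*$ directly into the group formula of Lemma~\ref{lemma:union_shapley_form}.

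The induction on $|E|$ that you announce, and the Fairness detour in your ``Inductive step'' bullet, are never used by your final argument and should be deleted.
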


\begin{proof}[Proof (sketch)]
The proof is based on the following steps:
\begin{itemize}
\item First, it is shown that for every vitality index satisfying Isolated Node and Leaf Node it holds $F_v(G^{\star}) = c \cdot D_v(G^{\star})$ for any graph $G^{\star} = (V,E)$ and node $v \in V$ and some constant $c \in \mathbb{R}$.
\item Using Node- and Edge-Normalization, we get that the sum of centralities in every connected component is fixed and equal to the case of a star graphs.
\item Now, Lemma~\ref{lemma:union_shapley_form} implies uniqueness of the measure (up to a scalar multiplication).
\end{itemize}
The full proof can be found in the appendix.
\end{proof}

\begin{figure}[t!]
\centering
\begin{tikzpicture}[x=6.5cm,y=6.5cm,rotate=90] 
  \tikzset{     
    e4c node/.style={circle,draw,minimum size=0.5cm,inner sep=0,text=black}, 
    e4c edge/.style={sloped,above,font=\footnotesize}
  }
\node[e4c node,fill=blue!31] (15) at (0.32, 1.20) {15}; 
\node[e4c node,fill=blue!20] (2) at (0.64, 0.62) {2}; 
\node[e4c node,fill=blue!31.8] (4) at (0.48, 1.05) {4}; 
\node[e4c node,fill=blue!26.6] (5) at (0.39, 0.61) {5}; 
\node[e4c node,fill=blue!46.6] (7) at (0.89, 1.03) {7}; 
\node[e4c node,fill=blue!26.6] (8) at (0.82, 1.22) {8}; 
\node[e4c node,fill=blue!26.6] (9) at (1.00, 1.19) {9}; 
\node[e4c node,fill=blue!46.6] (10) at (0.83, 0.86) {10}; 
\node[e4c node,fill=blue!60] (1) at (0.53, 0.73) {1}; 
\node[e4c node,fill=blue!40] (6) at (0.65, 0.84) {6}; 
\node[e4c node,fill=blue!34.2] (13) at (0.00, 1.15) {13}; 
\node[e4c node,fill=blue!41.6] (16) at (0.51, 0.91) {16}; 
\node[e4c node,fill=blue!30] (17) at (0.06, 0.82) {17}; 
\node[e4c node,fill=blue!34.2] (18) at (0.34, 1.02) {18}; 
\node[e4c node,fill=blue!35] (19) at (0.19, 0.87) {19}; 
\node[e4c node,fill=blue!31.2] (12) at (0.58, 1.21) {12}; 
\node[e4c node,fill=blue!51.2] (11) at (0.65, 1.02) {11}; 
\node[e4c node,fill=blue!65.8] (14) at (0.16, 1.09) {14}; 
\node[e4c node,fill=blue!54.6] (3) at (0.30, 0.77) {3}; 

  \node[right=1pt of 1] {Hani Hanjour};
  \node[below right=-2pt of 3] {Nawaf Alhazmi};
  \node[above right=-3pt of 7] {Waleed Alshehri};
  \node[above left=-5pt and 1pt of 11] {Marwan Al-Shehhi};
  \node[left=2pt of 14] {Hamza Alghamdi};
  
  \path[draw,thick]
  (18) edge[e4c edge]  (19)
  (17) edge[e4c edge]  (19)
  (16) edge[e4c edge]  (18)
  (14) edge[e4c edge]  (19)
  (14) edge[e4c edge]  (18)
  (14) edge[e4c edge]  (17)
  (14) edge[e4c edge]  (15)
  (13) edge[e4c edge]  (14)
  (12) edge[e4c edge]  (15)
  (11) edge[e4c edge]  (16)
  (11) edge[e4c edge]  (12)
  (10) edge[e4c edge]  (11)
  (1) edge[e4c edge]  (16)
  (8) edge[e4c edge]  (9)
  (7) edge[e4c edge]  (10)
  (7) edge[e4c edge]  (9)
  (7) edge[e4c edge]  (8)
  (6) edge[e4c edge]  (16)
  (6) edge[e4c edge]  (11)
  (6) edge[e4c edge]  (10)
  (4) edge[e4c edge]  (16)
  (4) edge[e4c edge]  (11)
  (1) edge[e4c edge]  (2)
  (1) edge[e4c edge]  (5)
  (1) edge[e4c edge]  (6)
  (1) edge[e4c edge]  (11)
  (1) edge[e4c edge]  (3)
  (3) edge[e4c edge]  (4)
  (3) edge[e4c edge]  (5)
  (3) edge[e4c edge]  (6)
  (3) edge[e4c edge]  (14)
  (3) edge[e4c edge]  (17)
  (3) edge[e4c edge]  (19)
  ;
\end{tikzpicture}
\caption{September 11 terrorist attack network. Group Attachment Centrality identifies Hamza Alghamdi (\#14) and Marwan Al-Shehhi (\#11) as the pair of nodes whose removal would most disrupt network interactions. In turn, Group Degree Centrality selects Nawaf Alhazmi (\#3) and Marwan Al-Shehhi (\#11).}
\label{fig:wtc}
\end{figure}

\begin{table}[t!]
\centering

\addtolength{\tabcolsep}{-3pt}    
\begin{tabular}{|c|l|c|c|}
\hline
\textbf{Rank} & \textbf{Name} & \textbf{Attachment} & \textbf{Degree} \\ \hline
1 & Hamza Alghamdi (\#14) & 3.32 & 6 \\ 
2 & Hani Hanjour (\#1) & 3.00 & 6 \\ 
3 & Nawaf Alhazmi (\#3) & 2.73 & 7 \\ 
4 & Marwan Al-Shehhi (\#11) & 2.59 & 6 \\ 
5 & Waleed Alshehri (\#7) & 2.33 & 3 \\ 
-- & Abdul Aziz Al-Omari (\#10) & 2.33 & 3 \\ 
7 & Ziad Jarrah (\#16) & 2.06 & 5 \\ 
8 & Mohamed Atta (\#6) & 2.02 & 5 \\ 
9 & Saeed Alghamdi (\#19) & 1.73 & 4 \\ 
10 & Ahmed Al-Haznawi (\#18) & 1.71 & 3 \\ 
11 & Salem Alhazmi (\#4) & 1.58 & 3 \\ 
12 & Mohand Alshehri (\#15) & 1.55 & 2 \\ 
-- & Fayez Ahmed (\#12) & 1.55 & 2 \\ 
14 & Ahmed Alnami (\#17) & 1.50 & 3 \\ 
15 & Khalid Al-Mihdhar (\#5) & 1.33 & 2 \\ 
-- & Wail Alshehri (\#8) & 1.33 & 2 \\ 
-- & Satam Suqami (\#9) & 1.33 & 2 \\ 
18 & Majed Moqed (\#2) & 1.00 & 1 \\ 
-- & Ahmed Alghamdi (\#13) & 1.00 & 1 \\ 
\hline
\end{tabular}
\caption{Centralities of nodes in the network from Figure~\ref{fig:wtc}, ranking according to Attachment Centrality.}
\label{tab:nodes_sorted_attachment}
\end{table}

\begin{example}\label{example:wtc}
Consider the graph in Figure~\ref{fig:wtc} that represents the terrorist group responsible for the September 11 attacks~\citep{Krebs:2002:mapping}.
While the top three nodes according to Attachment Centrality are \#14, \#1, and \#3, Group Attachment Centrality selects \#11 and \#14 as the best pair, with a score of $6.33$.
This is because removing both nodes detaches nodes \#12 and \#15 from the rest of the graph.

For Group Degree Centrality, the pair \#3 and \#11 is the unique group with a score of $7 + 6 = 13$. This follows from the fact that node \#3 is incident to both nodes \#14 and \#1.
\hfill $\lrcorner$
\end{example}

It is known that on trees both Attachment Centrality and Degree Centrality coincides~\citep{Skibski:etal:2019:attachment}.
We will use this fact in the following example.

\begin{example}\label{example:attachment}
Consider the graph in Figure~\ref{fig:attachment}, obtained from Figure~\ref{fig:two_paths} by adding the edges $\{A,C\}$ and $\{B,D\}$.
Due to isomorphism, all standard centrality measures assign equal importance to nodes $A$, $B$, $C$, and $D$.
Moreover, these nodes are more important than the others.

Now consider the problem of selecting two nodes.
According to Group Degree Centrality (Equation~\eqref{eq:group_degree}), a group's score is determined by the number of edges incident to it.
Hence, the groups $\{A,C\}$ and $\{D,E\}$ achieve the highest score, since removing them decreases the total number of edges by six.

In contrast, according to Attachment Centrality, nodes $A$-$D$ each have a centrality value of $2.5$, their immediate neighbors have centrality $2$, and the leaf nodes have centrality $1$.
Group Attachment Centrality selects the nodes whose removal most significantly disconnects the network.
If nodes $\{A,D\}$ (or $\{B,C\}$) are removed, the graph decomposes into four separate paths.
The centrality of this group equals $2.5 + 3.0 = 5.5$ (here, we apply the Vertex Exclusion principle: after removing $A$, the centrality of $D$ becomes $3$).

By contrast, removing group $\{A,B\}$ (or $\{C,D\}$) leaves one long path intact and creates two short paths.
As a result, this group performs worse, with a total score of $2.5 + 2.0 = 4.5$.
\hfill $\lrcorner$
\end{example}

Our axiomatizations are the first ones of Group Attachment and Degree Centralities.
They can be considered an extension of two existing joint axiomatizations of their node analogs~\citep{Skibski:etal:2019:attachment,Sosnowska:Skibski:2017:weighted}, but they are partially different.
In particular, other axiomatizations made use of locality axioms (under the name Locality and Additivity) and axioms that limit the value of centrality (such as Normalization and Monotonicity/Star-Max). 
In turn, we depend on duality of our normalization condition (Node/Edge-Normalization axioms).

\begin{figure}[t]
\centering
\begin{tikzpicture}[x=6cm,y=6cm] 
  \tikzset{     
    e4c node/.style={circle,draw,minimum size=0.6cm,inner sep=0,font=\footnotesize}, 
    e4c edge/.style={sloped,above,font=\footnotesize}
  }
  \node[e4c node] (1) at (0.0, 0.2) {E}; 
  \node[e4c node] (2) at (0.2, 0.2) {}; 
  \node[e4c node] (3) at (0.4, 0.2) {A}; 
  \node[e4c node] (4) at (0.6, 0.2) {B}; 
  \node[e4c node] (5) at (0.8, 0.2) {}; 
  \node[e4c node] (6) at (1.0, 0.2) {}; 
  \node[e4c node] (7) at (0.0, 0.00) {F}; 
  \node[e4c node] (8) at (0.2, 0.00) {}; 
  \node[e4c node] (9) at (0.4, 0.00) {C}; 
  \node[e4c node] (10) at (0.6, 0.00) {D}; 
  \node[e4c node] (11) at (0.8, 0.00) {}; 
  \node[e4c node] (12) at (1.0, 0.00) {}; 

  \path[draw,thick]
  (1) edge[e4c edge]  (2)
  (2) edge[e4c edge]  (3)
  (3) edge[e4c edge]  (4)
  (4) edge[e4c edge]  (5)
  (5) edge[e4c edge]  (6)
  (7) edge[e4c edge]  (8)
  (8) edge[e4c edge]  (9)
  (9) edge[e4c edge]  (10)
  (10) edge[e4c edge]  (11)
  (11) edge[e4c edge]  (12)
  (3) edge[e4c edge]  (9)
  (4) edge[e4c edge]  (10)
  ;
\end{tikzpicture}
\caption{Graph for Example~\ref{example:attachment}. Both Group Degree and Attachment Centralities select $\{A,D\}$ as the best pair of nodes.}\label{fig:attachment}
\end{figure}

\section{Computational Results}

For group centrality measures, two algorithmic questions are natural.  
The first is to compute the centrality of a specific group.  
The second is to find a group---usually of a specific size---that maximizes the centrality.  
In this section, we present general results for both problems for the whole class of vitality indices, under minimal assumptions on the index itself.

Let $F$ be an arbitrary vitality index. Fix a graph $G = (V, E)$ and a group $S \subseteq V$.  
Using the Vertex Exclusion axiom again, we have that for every node $v \in S$, it holds:
\begin{equation}\label{eq:complexity}
F_S(G) = F_{\{v\}}(G) + F_{S \setminus \{v\}}(G - v).
\end{equation}
This implies a simple $|S|$-step recursive procedure for computing the centrality of $S$, based on the centrality of individual nodes, for all vitality indices.  
This leads to the following statement.

\begin{corollary}
If a node vitality index $F$ of a single node in a graph with $n$ nodes can be computed in time $p(n)$, then for a group vitality index $F'$ that extends $F$, the centrality of any set of vertices $S$ can be computed in time $|S| \cdot p(n)$.
\end{corollary}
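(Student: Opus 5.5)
The plan is to unroll the recursion in Equation~\eqref{eq:complexity}, i.e., to apply the Vertex Exclusion property $|S|$ times. First, fix an arbitrary ordering $v_1, v_2, \dots, v_k$ of the elements of $S$, where $k = |S|$. By Theorem~\ref{theorem:one_extends}, $F'$ is the unique group vitality index extending $F$. By Theorem~\ref{theorem:axioms}, part (a)$\Rightarrow$(b), $F'$ satisfies Vertex Exclusion. Applying Vertex Exclusion repeatedly, first to $v_1$ in $G$, then to $v_2$ in $G - v_1$, and so on, gives by induction on $k$:
\[
F'_S(G) = \sum_{i=1}^{k} F'_{\{v_i\}}\bigl(G - \{v_1,\dots,v_{i-1}\}\bigr).
\]
Since $F'$ extends $F$, each term equals the node value $F_{v_i}\bigl(G - \{v_1,\dots,v_{i-1}\}\bigr)$. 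The base case $S = \emptyset$ holds because $F'_\emptyset(G) = 0$.

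Second, we bound the running time. The algorithm builds the graphs $G_i = G - \{v_1,\dots,v_{i-1}\}$ one after another. In each of the $k$ steps it calls the oracle for $F$ on $G_i$ with node $v_i$, and then deletes $v_i$ to obtain $G_{i+1}$. Each $G_i$ has at most $n$ nodes. Assuming $p$ is monotone, or simply reading $p(n)$ as a bound on graphs with at most $n$ nodes, each call costs at most $p(n)$. The total cost of the node computations is therefore $|S| \cdot p(n)$.

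The only point needing care is the bookkeeping. Deleting a node and adding up the $k$ values costs at most $O(n + |E|)$ per step. This is dominated by $p(n)$, since any algorithm computing $F$ must at least read its input. So the claimed bound holds, up to the usual convention of absorbing this linear overhead into $p(n)$. I do not expect a genuine obstacle: the substance is already contained in Theorem~\ref{theorem:axioms}, and the corollary is the iterated form of Equation~\eqref{eq:complexity}.
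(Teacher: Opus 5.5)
Your proposal is correct and matches the paper's argument: the paper simply declares the corollary immediate from Equation~\eqref{eq:complexity}, i.e., Vertex Exclusion unrolled $|S|$ times into $F'_S(G)=\sum_{i=1}^{|S|} F_{v_i}(G-\{v_1,\dots,v_{i-1}\})$, with one call to the node-level oracle per step. Your remarks on monotonicity of $p$ and on the node-deletion overhead only make explicit what the paper leaves implicit, and the appeal to Theorem~\ref{theorem:one_extends} is harmless but unnecessary, since Theorem~\ref{theorem:axioms} (a)$\Rightarrow$(b) already gives Vertex Exclusion for any group vitality index.
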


\begin{proof}
Immediate from Equation~\eqref{eq:complexity}.
\end{proof}

Clearly, if computing a node vitality index is NP-hard, then computing the corresponding group vitality index is also NP-hard.

To analyze the second problem, we require some mild additional assumptions.  
We assume that the invariant function is \emph{anonymous}, i.e., it gives the same value for isomorphic graphs: if for two graphs $G = (V, E)$ and $G' = (V', E')$ there exists a bijection $b: V \rightarrow V'$ such that $E' = \{\{b(u), b(v)\} : \{u, v\} \in E\}$, then $f(G) = f(G')$.  
Moreover, it is natural to assume that the graph assessment is minimized when there are no edges in the graph.  
We are not aware of any vitality index that violates these assumptions.  
As we now show, for every group vitality index that satisfies these assumptions, finding a group of a given size that maximizes the centrality is NP-hard.

\begin{proposition}
Let $F$ be a group vitality index with an anonymous invariant function $f$ that satisfies $f((V, E)) > f((V, \emptyset))$ whenever $|E| > 0$.  
Then finding a set $S$ of size $k$ that maximizes the centrality $F_S(G)$ is NP-hard.
\end{proposition}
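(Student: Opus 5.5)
Since $F_S(G) = f(G) - f(G - S)$ and $f(G)$ does not depend on $S$, maximizing $F_S(G)$ over groups of size $k$ is the same as minimizing $f(G - S)$ over such groups. We reduce from \emph{Vertex Cover} (equivalently \emph{Independent Set}). Given a graph $G = (V,E)$ with $|V| = n$ and an integer $k$, we ask for the maximizing set of size $k$ and check whether $G - S$ is edgeless.

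The key observation concerns the residual graph. For every $S$ with $|S| = k$, the graph $G - S$ has exactly $n - k$ nodes. By anonymity, $f$ takes the same value on all edgeless graphs with $n-k$ nodes; call this value $f_0$. If $S$ is a vertex cover, then $f(G - S) = f_0$. If $S$ is not a vertex cover, then $G - S$ has at least one edge, so the assumption $f((V',E')) > f((V',\emptyset))$ gives $f(G - S) > f_0$. Anonymity is what lets us compare against a single benchmark $f_0$ regardless of which $n-k$ nodes remain.

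It follows that $G$ has a vertex cover of size at most $k$ if and only if the maximum of $F_S(G)$ over $|S| = k$ equals $f(G) - f_0$. (We may pad any smaller cover to size exactly $k$, assuming $k \le n$.) Equivalently, a set returned by a maximization oracle is a vertex cover if and only if $G$ has one of size $k$. This condition is checkable in polynomial time, since we only need to test whether $G - S$ has an edge. We never need to evaluate $f$ itself. Hence a polynomial-time algorithm for the maximization problem would decide Vertex Cover, and the problem is NP-hard.

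The main subtlety is being precise about what ``finding a maximizing set'' means. We should phrase it as a search problem whose output we verify directly, so that the reduction does not rely on computing $f(G) - f_0$. This may be non-polynomial or even uncomputable for an arbitrary $f$. Verifying the returned set structurally sidesteps that issue, and the rest of the argument is routine.
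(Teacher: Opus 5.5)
Your proof is correct and matches the paper's argument: a reduction from $k$-Vertex Cover in which the returned maximizer is checked for being a vertex cover. Anonymity fixes a single value of $f$ on edgeless graphs with $n-k$ nodes, and the strict-increase assumption makes every non-cover score strictly below every cover. Your explicit appeal to anonymity and your padding of smaller covers to size exactly $k$ simply make precise steps that the paper's proof uses implicitly.
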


\begin{proof}
We use a reduction from the \textsc{$k$-Vertex Cover} problem.
Let $G$ be an arbitrary graph, and suppose we have an algorithm that finds a set $S$ of size $k$ that maximizes the centrality $F_S(G)$.  
Let $S^*$ be the returned set.  
If $S^*$ is a vertex cover of $G$, then the answer to the \textsc{Vertex Cover} problem is ``yes''.  
If $S^*$ is not a vertex cover, then a vertex cover of size $k$ does not exist, and the answer is ``no''.  

Indeed, suppose a vertex cover $S'$ of size $k$ exists. Then,
\[
F_{S'}(G) = f(G) - f(G - S') = f(G) - f((V \setminus S', \emptyset)),
\]
since $G - S'$ is a set of isolated vertices.

On the other hand, since $S^*$ is not a vertex cover, $G - S^* = (V \setminus S^*, E)$ has some edges, i.e., $|E| > 0$.  
Therefore,
\[
F_{S^*}(G) = f(G) - f((V \setminus S^*, E)).
\]
From our assumption, $f((V \setminus S^*, E)) > f((V \setminus S', \emptyset))$, which implies $F_{S'}(G) > F_{S^*}(G)$---a contradiction, since $S^*$ was assumed to be the group with the largest centrality.
\end{proof}

\subsection{Approximation Algorithm}

We have already shown that for every vitality index, computing the value of a group is as hard as computing the value for single nodes. 
However, computing Attachment Centrality is \#P-complete, with polynomial-time algorithms known only for a narrow class of chordal graphs~\citep{Skibski:etal:2019:attachment}.  
To address this, we present an approximation algorithm based on a novel sampling method for the Union Shapley value.  

To develop the approximation algorithm, we introduce a probabilistic interpretation of the Union Shapley value, which underpins group vitality indices.

\begin{proposition}\label{proposition:union_permutations}
The Union Shapley value of coalition $S$ in game $(N, g)$ satisfies:
\begin{equation}\label{eq:union_premutations}
US_S(N, g) = \frac{1}{|N|!} \sum_{\pi \in \Omega(N)} \sum_{i \in S} \frac{g(S_i^\pi) - g(S_i^\pi \setminus S)}{|S_i^\pi \cap S|},
\end{equation}
where $\Omega(N)$ is the set of all permutations of $N$, and $S_i^\pi$ is the set of elements that appear in $\pi$ no later than $i$, i.e., $S_i^\pi = \{j \in N : \pi(j) \le \pi(i)\}$.
\end{proposition}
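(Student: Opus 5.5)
The plan is to start from the right-hand side of \eqref{eq:union_premutations} and regroup the sum over permutations and $i \in S$ according to the coalition $T = S_i^\pi$. The summand depends on $(\pi,i)$ only through $T$: it equals $\bigl(g(T) - g(T\setminus S)\bigr)/|T\cap S|$. So it suffices to count, for each $T \subseteq N$, the pairs $(\pi, i)$ with $i \in S$ and $S_i^\pi = T$.

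For such a pair, $i$ must be the last element of $T$ in $\pi$ and must lie in $T \cap S$. In particular $T\cap S \neq \emptyset$. Conversely, fix $T$ with $T\cap S\neq\emptyset$ and fix $i \in T\cap S$. The permutations with $S_i^\pi = T$ are exactly those that
\begin{itemize}
\item place the elements of $T$ in the first $|T|$ positions,
\item put $i$ in position $|T|$,
\item and arrange everything else arbitrarily.
\end{itemize}
There are $(|T|-1)!\,(|N|-|T|)!$ of them. Summing over the $|T\cap S|$ choices of $i$ multiplies this count by $|T\cap S|$, which cancels the denominator exactly.

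This yields
\[
\frac{1}{|N|!}\sum_{\substack{T\subseteq N\\ T\cap S\neq\emptyset}} (|T|-1)!\,(|N|-|T|)!\,\bigl(g(T)-g(T\setminus S)\bigr).
\]
The final step is to reinstate the terms with $T\cap S=\emptyset$. For these $T\setminus S = T$, so $g(T)-g(T\setminus S)=0$ and they contribute nothing. This includes $T=\emptyset$, where the factorial $(-1)!$ would be undefined; the convention is that this term is zero, as in the definition of $US_S$, and it is harmless here since $g(\emptyset)-g(\emptyset)=0$. Adding them back gives exactly $US_S(N,g)$ as defined earlier.

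The only delicate point is the bijective count: we need the map $(\pi,i)\mapsto T$ to partition the pairs correctly, and in particular we need each $i \in T\cap S$ to be realizable as the last element of $T$. Both are immediate from the description above, so I expect no real obstacle beyond careful bookkeeping of the empty-intersection and empty-set cases.
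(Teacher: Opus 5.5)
Your proof is correct and follows essentially the same route as the paper: group the pairs $(\pi,i)$ by $T=S_i^\pi$, count $|T\cap S|\,(|T|-1)!\,(|N|-|T|)!$ such pairs, and cancel the denominator. You also treat the terms with $T\cap S=\emptyset$ (including $T=\emptyset$) explicitly, which settles a point the paper's proof passes over silently.
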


\begin{proof}
For each coalition $T \subseteq N$, we count how many times $T$ appears as $S_i^\pi$, i.e., the number of pairs $(\pi, i)$ such that $\pi \in \Omega(N)$, $i \in S$, and $T = S_i^\pi$.  
This occurs if and only if:
\begin{itemize}
\item $T$ appears in the first $|T|$ positions of $\pi$, and
\item $i$ is the last element of $T$; thus $i \in T \cap S$.
\end{itemize}
For each $i \in T \cap S$, there are $(|T| - 1)! (|N| - |T|)!$ such permutations.  
Hence, the total number of such pairs is $|T \cap S| \cdot (|T| - 1)! (|N| - |T|)!$.
Thus, the right-hand side of Equation~\eqref{eq:union_premutations} can be reformulated as:
\begin{align*}
\frac{1}{|N|!} & \sum_{\pi \in \Omega(N)} \sum_{i \in S} \frac{g(S_i^\pi) - g(S_i^\pi \setminus S)}{|S_i^\pi \cap S|} = \\
& = \frac{1}{|N|!} \sum_{T\subseteq N} \sum_{\substack{\pi \in \Omega(N), i\in S : S_i^\pi = T}} \frac{g(T) - g(T \setminus S)}{|T \cap S|} \\
& = \frac{1}{|N|!} \sum_{T\subseteq N} (|T| \!-\! 1)!(|N| \!-\! |T|)! \cdot (g(T) \!-\! g(T\setminus S)) \\
& = US_S(N, g).
\end{align*}
which concludes the proof.
\end{proof}

\begin{algorithm}[!t]
\caption{Approximation of Group Attachment Centrality}
\label{algorithm:approx}
\begin{algorithmic}[1]

\REQUIRE A graph $G = (V,E)$ and a group $S \subseteq V$
\ENSURE The value $A_S(G)$

\STATE $\textit{result} = 0$
\FOR{$i=1 \ \TO\  \textit{number of samples}$}
	\STATE $\pi = $ random permutation of $V$
	\FORALL{$i \in S$}
		\STATE $\textit{result} \pluseq 2(|K(G[S_i^\pi \setminus S])| - |K(G[S_i^{\pi}])| + |S \cap S_i^\pi|)/|S_i^\pi \cap S|$
	\ENDFOR	
\ENDFOR
\RETURN $\textit{result}/\textit{number of samples}$

\end{algorithmic}
\end{algorithm}

Algorithm~\ref{algorithm:approx} presents the pseudocode for a Monte Carlo sampling algorithm to approximate Group Attachment Centrality. 
 
The algorithm utilizes Proposition~\ref{proposition:union_permutations} for the coalitional game defined by Attachment Centrality.  
In this game, the marginal contribution of $S$ to $T$ simplifies to:
\[
f(T) - f(T \setminus S) = 2 \big( |K(G[T \setminus S])| - |K(G[T])| + |S \cap T| \big).
\]

We note that this approximation algorithm can be easily extended to compute Attachment Centrality of all nodes.  
To do so, for each permutation, all players are considered in order, and their individual marginal contributions are computed and stored separately.  
A disjoint-set (union-find) data structure~\citep{Galler:Fisher:1964} can be used to efficiently track the connected components throughout the computation.

\section{Conclusions}

In this paper, we have performed a comprehensive analysis of group vitality indices.  
First, we proved that every vitality index admits a unique extension to a group vitality index.  
Second, we provided two axiomatic characterizations of the entire class and showed that Group Degree and Attachment Centralities are the unique indices satisfying specific normalization conditions.  
Furthermore, we demonstrated that computing a group vitality index for a given group is as hard as computing a node vitality index, and that identifying the best group is always NP-hard under mild assumptions.  
Finally, we designed an approximation algorithm for computing Group Attachment Centrality.

Our work represents an initial step toward understanding the axiomatic properties of group centrality measures, and many promising directions remain for future research.  
For instance, an axiomatic analysis of other classical definitions of group measures could be developed, potentially by extending existing axiomatizations of node centralities.  
In particular, axiomatizing a group variant of PageRank appears both important and exciting.  
Further exploration of the connections between group centrality measures and coalitional game theory is another promising avenue.  
Finally, the axiom of \emph{Vertex Exclusion} suggests a novel approach for defining group centrality, which may be applicable beyond vitality indices.

\section*{Ethical Statement}

There are no ethical issues.

\section*{Acknowledgments}

Natalia Kucharczuk was supported by the National Science Centre under Grant No. 2018/31/B/ST6/03201.
Oskar Skibski was supported by the National Science Centre under Grant No. 2023/50/E/ST6/00665.

%
\appendix 

\onecolumn

\section{Proof of \cref{lemma:union_shapley_form}}

\begin{proof}
We use the induction on $|V|$ and $|S|$.
If $V = \{v\}$, then we have $F_{\{v\}}(G) = \frac{0! \cdot 0!}{1!} F_{\{v\}}(G)$ which holds trivially.
Assume $|V| > 1$, take a node $v \in V$.
From Group Balanced Contributions applied to every node $u \in V \setminus \{v\}$ we have:
\[ |V| \cdot F_{\{v\}}(G) = \\ 
\sum_{u \in V \setminus \{v\}} F_{\{v\}}(G - u) + \sum_{u \in V} F_{\{u\}}(G) - \sum_{u \in V \setminus \{v\}} F_{\{u\}}(G - v).\]
Note that the last two sums boil down to $g^*(V)$ and $g^*(V \setminus \{v\})$.
Hence, using the inductive assumption for first sum we get:
\begin{align*}
F_{\{v\}}(G) = &
\frac{1}{|V|} \sum_{u \in V \setminus \{v\}} \sum_{T \subseteq V \setminus \{u\}} \frac{(|T|-1)!(|V|-|T|-1)!}{(|V|-1)!} \left(g^*(T) - g^*(T \setminus \{v\})\right)
+ \frac{1}{|V|} \left(g^*(V) - g^*(V \setminus \{v\})\right) \\
= & \frac{1}{|V|} \sum_{T \subseteq V, v \in T} (|V|-|T|) \frac{(|T|-1)!(|V|-|T|-1)!}{(|V|-1)!} \left(g^*(T) - g^*(T \setminus \{v\})\right) + \frac{1}{|V|} \left(g^*(V) - g^*(V \setminus \{v\})\right) \\
= & \sum_{T \subseteq V, v \in T} \frac{(|T|-1)!(|V|-|T|)!}{|V|!} \left(g^*(T) - g^*(T \setminus \{v\})\right) = \sum_{T \subseteq V} \frac{(|T|-1)!(|V|-|T|)!}{|V|!} \left(g^*(T) - g^*(T \setminus \{v\})\right).
\end{align*}
where the last equality comes from the fact that if a coalition $T$ does not contain $v$, then $g^*(T)=g^*(T \setminus \{v\})$.

Now, take any group $S \subseteq V$, $|S| > 1$ and a node $v \in V$.
From Group Balanced Contributions, we have:
\[ F_S(G) - F_{S \setminus \{v\}}(G - v) = F_{\{v\}}(G) - F_{\emptyset}(G - S). \]
Hence, we can use the inductive assumption again and get:
\begin{align*}
F_S(G) = & F_{S \setminus \{v\}}(G - v) + F_{\{v\}}(G) \\
= & \sum_{T \subseteq V, v \not \in T} \frac{(|T|-1)! (|V|-|T|-1)!}{(|V|-1)!} \left(g^*(T) - g^*(T \setminus (S \setminus \{v\}))\right)
+\\
& + \sum_{T \subseteq V, v \in T} \frac{(|T|-1)! (|V|-|T|)!}{|V|!} \left(g^*(T) - g^*(T \setminus \{v\})\right) \\
= & \sum_{T \subseteq V, v \not \in T} \left(\frac{(|T|-1)! (|V|-|T|)!}{|V|!}+\frac{(|T|)!(|V|-|T|-1)!}{|V|!}\right) \left(g^*(T) - g^*(T \setminus S)\right)\\
& + \sum_{T \subseteq V, v \in T} \frac{(|T|-1)! (|V|-|T|)!}{|V|!} \left(g^*(T) - g^*(T \setminus \{v\})\right) \\
= & \sum_{T \subseteq V, v \not \in T} \frac{(|T|-1)! (|V|-|T|)!}{|V|!} \left(g^*(T) - g^*(T \setminus S)\right)\\
& + \sum_{T \subseteq V, v \in T} \frac{(|T|-1)! (|V|-|T|)!}{|V|!} \left(g^*(T \setminus \{v\}) - g^*(T \setminus S)\right) \\
& + \sum_{T \subseteq V, v \in T} \frac{(|T|-1)! (|V|-|T|)!}{|V|!} \left(g^*(T) - g^*(T \setminus \{v\}))\right) \\
= & \sum_{T \subseteq V} \frac{(|T|-1)! (|V|-|T|)!}{|V|!} \left(g^*(T) - g^*(T \setminus S)\right).
\end{align*}
This concludes the proof.
\end{proof}

\newpage
\section{Proof of \cref{theorem:axioms}}

\begin{proof}
(a) $\Rightarrow$ (b):  
Let $F$ be a group vitality index with an invariant function $f$. Then:
\begin{align*}
F_S(G) &= f(G) - f(G - S) \\
&= f(G) - f(G - v) + f(G - v) - f(G - S) \\
&= F_{\{v\}}(G) + F_{S \setminus \{v\}}(G - v).
\end{align*}

(b) $\Rightarrow$ (c):  
Assume $F$ satisfies Vertex Exclusion.  
Take two coalitions $S, T \subseteq V$ and let $T = \{v_1,\dots,v_t\}$.  
From Vertex Exclusion we know that:
\[
F_T(G) = \sum_{i=1}^{t} F_{v_i}(G - \{v_1,\dots,v_{i-1}\}).
\]
Analogously, we get:
\begin{align}
F_{S \cup T}(G) &= \sum_{i=1}^{t} F_{v_i}(G - \{v_1,\dots,v_{i-1}\}) + F_{S \setminus T}(G - T) \notag \\
&= F_T(G) + F_{S \setminus T}(G - T). \label{eq:proof_axioms_1}
\end{align}
By considering coalition $S$ instead of $T$, we also have:
\begin{equation}\label{eq:proof_axioms_2}
F_{S \cup T}(G) = F_S(G) + F_{T \setminus S}(G - S).
\end{equation}
Combining Equations~\eqref{eq:proof_axioms_1} and \eqref{eq:proof_axioms_2} we get that $F$ satisfies Group Balanced Contributions.

(c) $\Rightarrow$ (a):  
Take a group centrality measure $F$ that satisfies Group Balanced Contributions.
We will show that if a centrality measure satisfies Balanced Contributions, then it is a group vitality index for the invariant function defined as follows:
\[
f(G) = \sum_{T \subseteq V} \left(\frac{(|T| - 1)! (|V| - |T|)!}{|V|!} \sum_{v \in V} F_v(G[T])\right) = \sum_{T \subseteq V} \left(\frac{(|T| - 1)! (|V| - |T|)!}{|V|!} g^*(T)\right).
\]
From Lemma~\ref{lemma:union_shapley_form} we know that
\[ F_S(G) = \sum_{T \subseteq V} \frac{(|T|-1)!(|V|-|T|)!}{|V|!} \left(g^*(T)-g^*(T \setminus S)\right)\]
holds for every graph $G = (V,E)$ and $S \subseteq V$.
It is enough to verify that:
\begin{align*} 
F_S(G) & = f(G) - f(G-S) \\
& = \sum_{T \subseteq V} \left(\frac{(|T| - 1)! (|V| - |T|)!}{|V|!} g^*(T)\right) - \sum_{T \subseteq V \setminus S} \left(\frac{(|T| - 1)! (|V| - |S|-|T|)!}{(|V|-|S|)!} g^*(T)\right) \\
& = \sum_{T \subseteq V} \left(\frac{(|T| - 1)! (|V| - |T|)!}{|V|!} g^*(T)\right) - \sum_{T \subseteq V \setminus S} g(T) \cdot \left(\frac{(|T| - 1)! (|V| - |S|-|T|)!}{(|V|-|S|)!} \right) \\
& = \sum_{T \subseteq V} \left(\frac{(|T| - 1)! (|V| - |T|)!}{|V|!} g^*(T)\right) - \sum_{T \subseteq V \setminus S} g(T) \cdot \left(\sum_{R \subseteq S} \frac{(|T|+|R| - 1)! (|V| - |R|-|T|)!}{|V|!}\right) \\
& = \sum_{T \subseteq V} \left(\frac{(|T| - 1)! (|V| - |T|)!}{|V|!} g^*(T)\right) - \sum_{T \subseteq V} \frac{(|T| - 1)! (|V| -|T|)!}{|V|!} g(T \setminus S) \\
& = \sum_{T \subseteq V} \frac{(|T| - 1)! (|V| - |T|)!}{|V|!} \left(g^*(T) - g^*(T \setminus S)\right)
\end{align*}
which concludes the proof.

\end{proof}

\newpage

\section{Proof of \cref{theorem:axioms_attachment}}

We will use the following lemma.
\begin{lemma}\label{lemma:star}
Let $F$ be an arbitrary vitality index satisfying Isolated Node and Leaf Node with constant $c$, then for any star $G^{\star} = (V,E)$ and vertex $v\in V$, we have $F_v(G^{\star}) = c \cdot D_v(G^{\star})$.
\end{lemma}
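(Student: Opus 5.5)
We argue by induction on the number of leaves $k$ of the star $G^{\star}$, using Balanced Contributions, which every vitality index satisfies by the result of \citet{Skibski:2021:vitality}. Write the star as a center $w$ with leaves $u_1,\dots,u_k$. A single isolated node ($k=0$) is covered by Isolated Node. For $k=1$ both nodes are leaves, so Leaf Node gives $F_{u_1}=F_w=c=c\cdot D$. For any $k\ge 1$, each leaf $u_i$ has degree one, so Leaf Node directly gives $F_{u_i}(G^{\star})=c=c\cdot D_{u_i}(G^{\star})$. It therefore remains to prove $F_w(G^{\star})=c\cdot k$.

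For the center, apply Balanced Contributions to the pair $w,u_1$:
\[
F_w(G^{\star}) - F_w(G^{\star}-u_1) = F_{u_1}(G^{\star}) - F_{u_1}(G^{\star}-w).
\]
The graph $G^{\star}-u_1$ is a star with $k-1$ leaves, so the inductive hypothesis gives $F_w(G^{\star}-u_1)=c(k-1)$. In $G^{\star}-w$ the node $u_1$ is isolated, so $F_{u_1}(G^{\star}-w)=0$ by Isolated Node. Leaf Node gives $F_{u_1}(G^{\star})=c$. Substituting these three values yields $F_w(G^{\star})=c(k-1)+c=ck$, which completes the induction.

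One subtle point is the base case $k=1$, where the center is itself a leaf. It is consistent, since it gives $F_w=c=c\cdot 1$. If we start the induction at $k=0$, the step from $k=0$ to $k=1$ uses $F_w(\text{isolated node})=0$, again by Isolated Node, so both starting points agree.

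A second point is that Isolated Node must be read as applying to isolated nodes in an arbitrary graph, here $G^{\star}-w$, which may contain several isolated nodes. The axiom is stated for every graph, so this is fine.

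The main obstacle is only bookkeeping: ensuring that Balanced Contributions applies to node-level values $F_v$. This holds because $F$ is a vitality index, or equivalently because Group Balanced Contributions specializes to singletons with $S=\{w\}$ and $T=\{u_1\}$ if one works at the group level.
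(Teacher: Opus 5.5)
Your proof is correct and follows essentially the same route as the paper. Inducting on the number of leaves $k$ is the same as the paper's induction on $|V|=k+1$: Leaf Node handles the leaves, and for the center you apply Balanced Contributions with one leaf $u$, using $F_u(G^{\star}-w)=0$ from Isolated Node, $F_u(G^{\star})=c$ from Leaf Node, and the inductive hypothesis on $G^{\star}-u$.
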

\begin{proof}
We will prove it by induction on the number of vertices.
If $V = \{v\}$, then from Isolated Node $F_v(G^{\star}) = 0 = D_v(G^{\star})$. 
    
Now, assume that $|V| > 1$ and the centrality of every node in the star with fewer than $|V|$ vertices is equal to its degree times $c$.
Let $v\in V$ be the center of the star $G^{\star}$, and $u\in V, u\not=v$ be a leaf. 
By the Leaf Node axiom, the centrality of any leaf is equal to its degree multiplied by $c$. Thus, it remains to prove that $F_v(G^{\star}) = c \cdot D_v(G^{\star})$.
    
Since any vitality index satisfies Balanced Contributions (as shown in Theorem~\ref{theorem:axioms}), we have:
\[ F_v(G^{\star}) - F_v(G^{\star}-u) = F_{u}(G^{\star}) - F_{u}(G^{\star}-v). \]
By the Isolated Node axiom, $F_{u}(G^{\star}-v) = 0$, and by the Leaf Node axiom, $F_{u}(G^{\star}) = c$. 
Moreover, from our inductive assumption, $F_v(G^{\star}-u) = c \cdot D_v(G^{\star}-u)$. Hence:
\[ F_v(G^{\star}) = c \cdot (D_v(G^{\star}-e) + 1) = c \cdot D_v(G^{\star}). \]
which completes the proof.
\end{proof}

\begin{proof}
Let us return to the proof of Theorem~\ref{theorem:axioms_attachment}.
Clearly, Group Attachment Centrality multiplied by any constant $c \in \mathbb{R}$ is a group vitality index that satisfies the axioms Isolated Node, Leaf Node (with constant $c$), and Node-Normalization.
Now, take an arbitrary vitality index that satisfies these three axioms.
We will prove that it is equal to Group Attachment Centrality multiplied by $c$.

Fix any graph $G = (V,E)$ and group $S \subseteq V$.
Let $C$ be any component of $G$, and consider the subgraph induced by $C$: $G[C] = (C, E[C])$.
From Node-Normalization, there exists a function $h: \mathbb{N} \rightarrow \mathbb{R}$ such that $\sum_{v\in C} F_v(G) = h(|C|)$.
    
Let $G^{\star} = (V^{\star}, E^{\star})$ be a star consisting of $|C|$ nodes. 
From Lemma~\ref{lemma:star}, for any $v\in V^{\star}$ we have 
\[ F_v(G^{\star}) = c \cdot D_v(G^{\star}), \] 
so that
\[ \sum_{v\in V^{\star}} F_v(G^{\star}) = c \cdot \sum_{v\in V^{\star}}D_v(G^{\star}) = 2c \cdot (|C|-1). \]

On the other hand, since $G^{\star}$ is connected, we also have: 
\[ \sum_{v\in V^{\star}} F_v(G^{\star}) = h(|V^{\star}|) = h(|C|).\] 
Hence, $h(C) = 2c \cdot (|C|-1)$.

Since this analysis holds for every connected component of the graph, it follows that:
\[ \sum_{v \in V} F_v(G) = \sum_{C \in K(G)} (2c \cdot (|C|-1)) = 2c \cdot (|V|-|K(G)|). \]
Moreover, this equality holds for any graph, including induced subgraphs of $G$.

Now, from Lemma~\ref{lemma:union_shapley_form} we have that:
\begin{align*} 
F_S(G) & = \sum_{T \subseteq V} \frac{(|T| - 1)! (|V| - |T|)!}{|V|!} \left(\sum_{v \in T} F_v(G[T]) - \sum_{v \in T \setminus S} F_v(G[T \setminus S])\right)\\
& = \sum_{T \subseteq V} \frac{(|T| - 1)! (|V| - |T|)!}{|V|!} \left(2c \cdot \left((|T|-|K(G[T])|) - (|T \setminus S|-|K(G[T \setminus S]|)\right)\right) \\
& = c \sum_{T \subseteq V} \frac{(|T| - 1)! (|V| - |T|)!}{|V|!} 2 (|T|-|K(G[T]|) - \\
& \ \ \ \ \  - c \sum_{T \subseteq V} \frac{(|T| - 1)! (|V| - |T|)!}{|V|!}  2 (|T \setminus S|-|K(G[T \setminus S]|) \\
& = c \cdot f^A(G) - 
c \cdot \sum_{T \subseteq V \setminus S} 2 (|T|-|K(G[T]|) \sum_{R \subseteq S}\frac{(|T|+|R| - 1)! (|V| - |T|-|R|)!}{|V|!}   \\
& = c \cdot f^A(G) - 
c \cdot \sum_{T \subseteq V \setminus S} 2 (|T|-|K(G[T]|) \frac{(|T| - 1)! (|V| - |S|-|T|)!}{(|V|-|S|)!} \\
& = c \cdot (f^A(G) - f^A(G - S)) = c \cdot A_S(G),
\end{align*}
which concludes the proof.
\end{proof}

\section{Proof of \cref{theorem:axioms_degree}}

\begin{proof}
Clearly, Group Degree Centrality multiplied by any constant $c \in \mathbb{R}$ is a group vitality index that satisfies Isolated Node, Leaf Node with constant $c$ and Edge-Normalization.
Take a vitality index that satisfies these axioms.
We will prove that it is equal to Group Degree Centrality multiplied by $c$.

Fix any graph $G = (V,E)$ and group $S \subseteq V$.
Let $C$ be any component of $G$, and consider the subgraph induced by $C$: $G[C] = (C, E[C])$.
From Edge Normalization, there exists a function $h: \mathbb{N} \rightarrow \mathbb{R}$ such that $\sum_{v\in C} F_v(G) = h(E[C])$.
    
Let $G^{\star} = (V^{\star}, E^{\star})$ be a star consisting of $|E[C]|$ edges. 
From Lemma~\ref{lemma:star}, for any $v\in V^{\star}$ we have $F_v(G^{\star}) = c \cdot D_v(G^{\star})$ so that 
\[ \sum_{v\in V^{\star}} F_v(G^{\star}) = c \cdot \sum_{v\in V^{\star}}D_v(G^{\star}) = 2c \cdot |E^{\star}| = 2c \cdot |E[C]|. \]
On the other hand, since $G^{\star}$ is connected, we also have:
\[ \sum_{v\in V^{\star}} F_v(G^{\star}) = h(|E^{\star}|) = h(|E[C]|). \] 
Hence, $h(E[C]) = 2c \cdot |E[C]|$.

Since this analysis holds for every component of the graph, it follows that:
\[ \sum_{v \in V} F_v(G) = \sum_{C \in K(G)} (2c \cdot |E[C]|) = 2 \cdot |E|. \]

Now, from Lemma~\ref{lemma:union_shapley_form} we have that:
\begin{align*} 
F_S(G) & = \sum_{T \subseteq V} \frac{(|T| - 1)! (|V| - |T|)!}{|V|!} \left(\sum_{v \in T} F_v(G[T]) - \sum_{v \in T \setminus S} F_v(G[T \setminus S])\right)\\
& = \sum_{T \subseteq V} \frac{(|T| - 1)! (|V| - |T|)!}{|V|!} \left(2c \cdot \left(|E[T]| - |E[T \setminus S]|\right)\right) \\
& = \sum_{\{u,v\} \in E \setminus E[V \setminus S]} 2c \cdot \sum_{T \subseteq V : u,v \in T} \frac{(|T| - 1)! (|V| - |T|)!}{|V|!} = c \cdot \left(|E| - |E[V \setminus S]|\right) = c \cdot D_S(G),
\end{align*}
which concludes the proof.
\end{proof}

\end{document}